\documentclass{article}
\usepackage{colt09e}
\usepackage{times}
\usepackage{epsfig}%,amsmath,amsthm}%,amssymb} latexsym,amsfonts,mathrsfs}%, 
\usepackage{amsmath}
\usepackage[noeepic]{qtree}

%%% \allowdisplaybreaks[3]      % Allow page breaks in the middle of displayed equations

\hyphenation{half-space hypo-thesis}

% Define all theorems, lemmas, etc.
%\newtheorem{theorem}{Theorem}[section]
%\newtheorem{corollary}[theorem]{Corollary}
%\newtheorem{claim}[theorem]{Claim}
%\renewcommand{\thefootnote}{\fnsymbol{footnote}}
\newcommand{\abs}[1]{\lvert #1 \rvert}
\newcommand{\rank}{\operatorname{rank}}
\newcommand{\union}{\cup}    
\newcommand{\Union}{\bigcup} 
  
\newcommand{\Intersect}{\bigcap}
\newcommand{\card}[1]{\abs{#1}}
\newcommand{\cC}{\mathcal{C}}
\newcommand{\cI}{\mathcal{I}}
\newcommand{\cO}{\mathcal{O}}
\newcommand{\defeq}{\,:=\,}    
\renewcommand{\th}{\ifmmode{^{\textrm{th}}}\else{\textsuperscript{th}\ }\fi}
\newcommand{\comment}[1]{}

\begin{document}
\title{Matroids Hitting Sets and Unsupervised Dependency Grammar Induction}

%\author{Newton}
%\address{University 1}
%\email{email 1}

\author{
\large Nicholas Harvey$^1$ %\\ Mathematics\\University of Waterloo 
\And \large  David Karger$^3$ %\\  Computer Science\\MIT
 \And \large Vahab Mirrokni$^4$ %\\  Google 
 \And \large  Virginia Savova$^2$ %\\ Systems Biology\\Harvard Medical School  
 \And \large Leonid Peshkin$^{2,*}$ %\\Systems Biology\\Harvard Medical School
}

\maketitle

%{\bf Affiliations:} \\
{\tiny $^1$ - Dept of Computer Science, 
Univ of British Columbia, Vancouver, Canada \\

$^2$ - Systems Biology Dept, Harvard Medical School, Boston, USA \\

$^3$ - Dept of Computer Science, MIT, Boston, USA

$^4$ - Google

$^*$ - corresponding author peshkin@gmail.com }
\\

\begin{abstract}
This paper formulates a novel problem on graphs: find the minimal subset of edges in a fully connected graph, such that the resulting graph contains all spanning trees for a set of specified subgraphs. This formulation is motivated by an unsupervised grammar induction problem from computational linguistics. We present a reduction to some known problems and algorithms from graph theory, provide computational complexity results, and describe an approximation algorithm. 

\end{abstract}

%%%%%%%%%%%%%%%%%%%%%%%%%%%%%%%%%%%%%%%%%%%%%%%
\section{Introduction}

%We consider an interesting optimization problem on graphs which involves
%finding a subset of the edges that contains a spanning tree for certain induced subgraphs.
%This problem is motivated by a practical application from  the field of computational linguistics.

%%%%%%%%%%%%%%%%%%%%%%%%%%%%%%%%%%%%%%%%%%%%%%%%%%%%%%%
%\section{Dependency grammar} 

Linguistic representations of natural language syntax arrange syntactic dependencies among the words in a sentence into a tree structure, of which the string is a one dimensional projection. We are concerned with the task of analyzing a set of several sentences, looking for the most parsimonious set of corresponding syntactic structures, solely on the basis of co-occurrence of words in sentences. 
We proceed by first presenting an example, then providing a general formulation of dependency structure and grammar induction. 

\begin{figure}[b]
%\begin{flushleft}
\centerline{
\vspace{0.1cm}
\includegraphics[width=9cm]{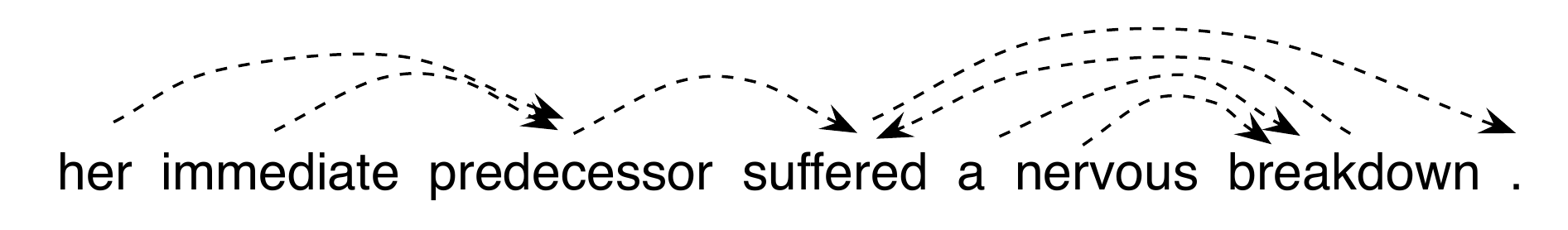}
}
%\end{flushleft}
\caption{An illstration of a dependency structure.} 
\label{fig:enc}
\end{figure} 

Consider a sentence {\em "Her immediate predecessor suffered a nervous breakdown.}"
A dependency grammar representation of this sentence shown in Figure~\ref{fig:enc} captures dependency between the subject, the object and the verb, as well as dependency between the determiner and the adjectives and their respective nouns.  
In this sentence, the subject \emph{predecessor} and the object \emph{breakdown} are related to the verb \emph{suffered}. The verb \emph{suffered} is the root of the dependency structure, that is illustrated in the diagram by a link to the period.  
Figure \ref{fig:tree} left represents the same dependency structure in a different way by ignoring the direction. Instead the dependence is related to the relative depth in the tree. 

In a dependency tree, each word is the mother of its dependents, otherwise known as their {\sc  head}. To linearize the dependency tree in Figure \ref{fig:tree}.left into a string, we introduce the dependents recursively next to their heads: \\
iteration 1: suffered \\  
iteration 2: predecessor suffered breakdown \\ 
iteration 3: her predecessor suffered a breakdown. 

Dependency and the related link grammars have received a lot of attention in the field of computational linguistics in recent years, since
these grammars enable much easier parsing than alternatives that are more complex lexicalized parse structures.
There are applications to such popular tasks as machine translation and information retrieval. 
However, all of the work is concerned with parsing, i.e. inducing a parse structure given a corpus and a grammar, rather than with grammar induction. Some work is concerned with inducing parameters of the grammar from annotated corpora, for example see work by Eisner
on dependency parsing~\cite{eisner96} or more recent work by McDonald et al.~\cite{Pereira04} and
references therein.  It has been pointed out~\cite{Pereira04} that parsing with dependency grammars
is related to Minimal Spanning Tree algorithms in general and in particular Chu-Liu-Edmonds MST algorithm was applied to dependency parsing. 

An established computational linguistics textbook has the following to say on the subject~\cite{manning99foundations}: 
{\it "... doing grammar induction from scratch is still a difficult, largely unsolved problem, and hence much emphasis has been placed on learning from bracketed corpora."} 
If grammar is not provided to begin with, parsing has to be done concurrently with learning the grammar. In the presence of grammar, among all the possibilities one needs to pick a syntactic structure consistent with the grammar. In the absence of grammar, it makes sense to appeal  to Occam's razor principle and look for the minimal set of dependencies which are consistent among themselves. 

More formally, a dependency grammar consists of a lexicon of terminal symbols (words), and an inventory of dependency relations specifying inter-lexical requirements. A string is generated by a dependency grammar if and only if:  
\begin{itemize}
	\item \vspace{-.2cm} Every word but one (ROOT) is dependent on another word. 
	\item \vspace{-.2cm} No word is dependent on itself either directly or indirectly. 
		\item \vspace{-.2cm} No word is directly dependent on more than one word. 
		\item \vspace{-.2cm} Dependencies do not cross.
\end{itemize}

\begin{figure}[bt]
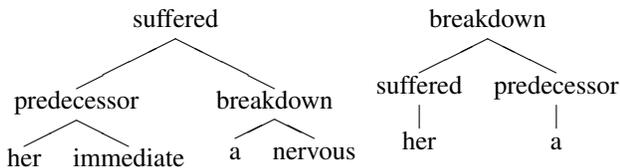

\centerline{
\qtreecenterfalse
\Tree [.{suffered} [.{predecessor} her immediate ] [.{breakdown} a nervous ] ]
\hspace{-.502cm}
\Tree [.{breakdown} [.{suffered} her ] [.{predecessor} a ] ]
}
\par
\caption{{\bf left:} A projective dependency structure for a sample sentence; {\bf right:} an example of incorrect structure of a sample sentence, also illustrating non-projective structure. }
\label{fig:tree}
\end{figure}

Unlike the first three constraints, the last constraint is a linearization constraint, usually introduced to simplify the structure and is empirically problematic. The structure in figure~\ref{fig:tree}.left is an example of so-called projective parse, in which dependency links mapped onto the sentences word sequence do not cross. Figure~\ref{fig:tree}.right illustrates an incorrect parse of the sentence with non-projective dependancies: "her"$\rightarrow$"suffered" is crossing "a"$\rightarrow$"predecessor").  While the vast majority of English sentences observe the projectivity constraint, other languages allow much more flexibility in word order. Non-projective structures include wh-relative clauses \cite{Pike43}, parentheticals \cite{McCawley82}, cross-serial constructions of the type found in Dutch and Swiss-German \cite{Ojeda88}, as well as free or relaxed word order languages \cite{Pullum82}. Therefore, it is interesting whether grammar induction can be performed without regard to word order.
 
A truly cross-linguistic formulation of dependency parsing corresponds to finding a spanning tree (parse) in a completely connected subgraph of word nodes and dependency edges.  The grammar induction problem in the same setting corresponds to inducing the minimal fully-connected subgraph which contains spanning trees for all sentences in a given corpus.
Consider three sentences: "Her immediate predecessor suffered a nervous breakdown.",  "Her predecessor suffered a stroke.", "It is a nervous breakdown."
Intuitively, the repetition of a word cooccurrence informs us about grammatical co-dependence.

Here is a formulation of the grammar induction problem as an optimization problem: 
Given a lexicon $V$ and a set of $k$ sentences $S_1,\ldots ,S_k$ s.t. $S_i \subset V$ (a.k.a. {\em corpus}) 
the objective is to find the most parsimonious combination of dependency structures. i.e. such set of spanning trees for all $S_i$  that has the minimal cardinality of a joint set of edges. 

In section~\ref{sptree} of this paper, we formally introduce the related graph-theoretic problem. In section~\ref{lognhard} we show that the problem is hard to approximate within a factor of $c \log n$ for weighted instances, and hard to approximate within some constant factor (APX-hard) for unweighed instances.
In section~\ref{matroids}, we generalize the problem to matroids.
Here we prove that the problem is hard to approximate within a factor
of $c \log n$, even for unweighed instances.
We conclude with a positive result -- an algorithm for the matroid problem which constructs a solution whose cardinality is within $O(\log n)$ of optimal.

%%%%%%%%%%%%%%%%%%%%%%%%%%%%%%%%%%%%%%%%%%%%%%%%%%%%%%%
%\subsection{Recovering links from ROOT}

%A dependency grammar G is a tuple \{V, ROOT,L\}, where V is a finite set of symbols, ROOT is a special symbol, and L is a set of ordered pairs, \emph{links} - $<w_i, w_j>, w_i \in V, w_j \in V \cup ROOT$.

%Let S be a set of ordered pairs $<x, y>_i$, where $1\leq i \leq number of pairs$. \\
%$Left(S)={x_i}$ where $1\leq i \leq number of pairs$ and $x_i$ is the first member of pair $i$. \\ 
 %$Right(S)={y_i}$, where $1\leq i \leq number of pairs$ and $y_i$ is the second member of pair $i$.\\
%$MAP(x) = y iff <x, y> \in S$

%A dependency parse P of a sentence S, given a dependency grammar G, is a set of ordered pairs $<w_i, w_j>$ such that:\\
%$<w_i, w_j> \in L_G$\\
%Each $w \in S$ is in $Left(P)$ exactly once\\
%$ROOT \in Right(P)$ exactly once\\
%If $w \in Right(P)$ then $w \in S or w = ROOT$\\
%$MAP*(w_j)\neq w_i$ \\

%Let P be a dependency parse and P' be a set of unordered pairs such that for every pair $p' \in P'$ there is an ordered pair $p in P$ and $p'$ and $p$ contain the same elements. Then $P$ can be recovered from $P'$. \\
%Proof:\\
%Define ``neighbor of'': a is a neighbor of b iff $\{a,b\} \in P'$ \\
%Define ``frontier'' $F= \{ROOT\}$\\
%Iterate $F= F \cup {w_i} w_i$ a neighbor of some $w_j \in F$.\\
%Define the frontier as $F \subset S$

%%%%%%%%%%%%%%%%%%%%%%%%%%%%%%%%%%%%%%%%%%%%%%%%%%%

\begin{figure}[tb]
\centerline{\includegraphics[width=8.5cm,angle=0]{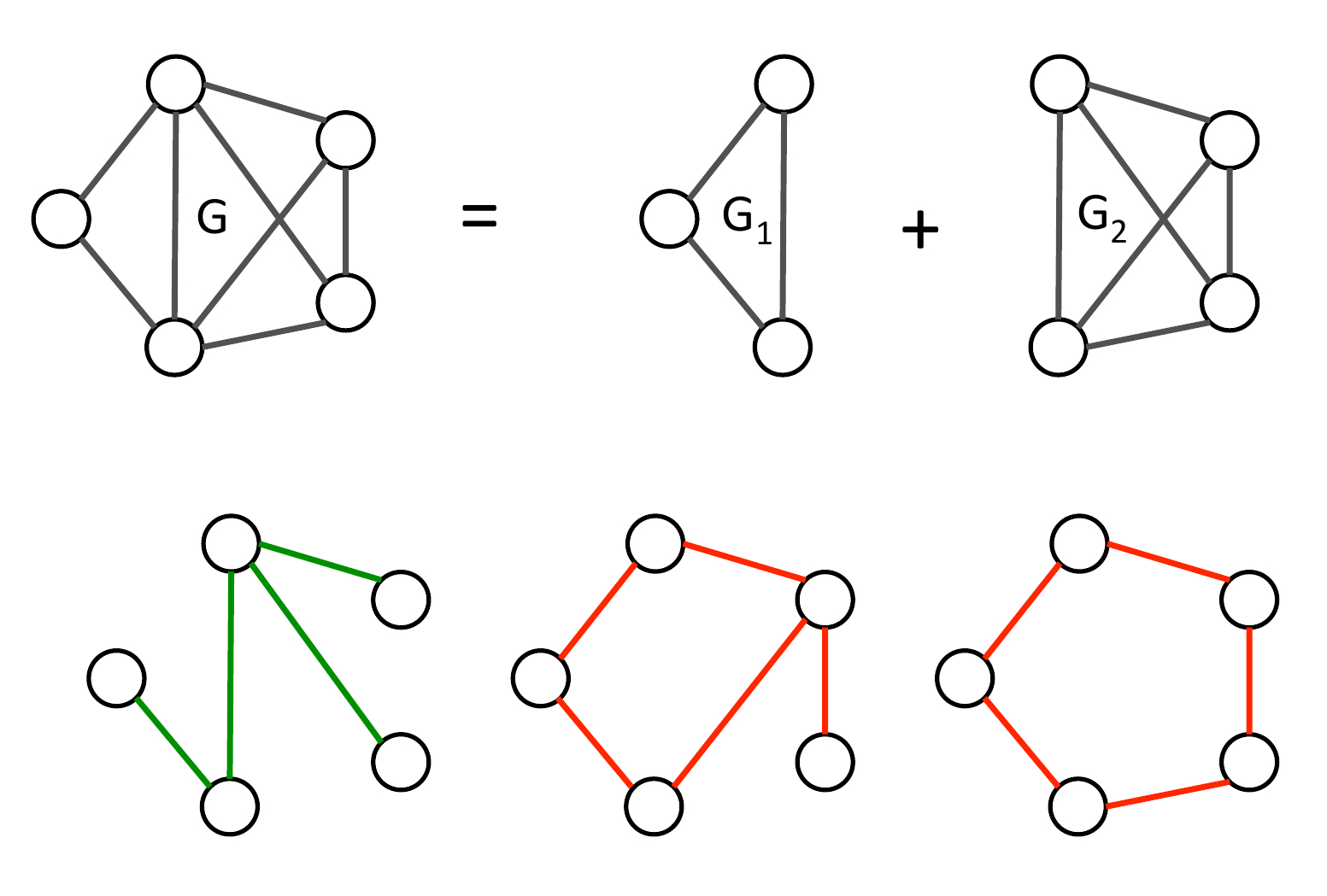}}
\caption{{\bf top:}
An instance of a problem for a graph $G$ consisting of two sub-graphs $G_1$ and $G_2$; {\bf bottom:} examples of a correct solution on the left (green) and two incorrect solutions on the right (red).  }
    \label{MSFexample}
\end{figure}

\section{The Problem for Spanning-Trees}
\label{sptree}

Let $G=(V,E)$ be a graph and let $S_1, \ldots, S_k$ be arbitrary subsets of $V$.
Our objective is to find a set of edges $F \subseteq E$ such that 
\begin{itemize}
\item $F$ contains a spanning tree for each induced subgraph $G[S_i]$, and
\item $|F|$ is minimized.
\end{itemize}
We call this the Min Spanning-Tree Hitting Set problem. Figure~\ref{MSFexample} illustrates one instance of this problem.  A graph $G$ consist of two sub-graphs $G_1$ and $G_2$. We present one possible correct solution on the left ($|F|$ = 4) and two sample incorrect solutions ($|F|$ = 5) on the right.  
The Min Spanning-Tree Hitting Set problem may be generalized to include a weight function $w$ on the edges of $G$.
The objective for the weighted problem is the same as before, except that we seek to minimize $w(F)$.
Notice that the problem initially appears similar to the group Steiner problem~\cite{Hauptmann2013ACO}, since the objective is to connect certain subsets of the nodes.
However, our condition on the subgraph is slightly different:
we require that the given subsets of nodes are \emph{internally} connected.

To develop some intuition for this problem, let's analyze a simple greedy ad-hoc solution: first, assign all the edges weight equivalent to the number of sub-graphs it is included into, i.e. count the frequency of node pairs in the input set; then fragment the graph into subgraphs, keeping the weights and run the standard MST algorithm, to find a spanning tree for each subgraph. 
Figure~\ref{txt4dcmp} presents a counterexample to simple heuristics approaches. The following sub-sets make up the input as indicated via edges of distinct color and pattern in the figure: $\{ 1,4,5 \}, \{ 2,4,5 \},$ $\{ 3,4,5 \}, \{ 1,4 \},$ $\{ 1,5 \}$, $\{ 2,4 \}, \{ 2,5 \},$ $\{ 3,4\}, \{ 3,5 \}$.  The optimal solution to this instance does not contain the edge $\{ 4,5 \}$, yet this edge is a member of the most (namely three) sub-sets.

\begin{figure}[bht]
\centerline{\includegraphics[width=6cm,angle=0]{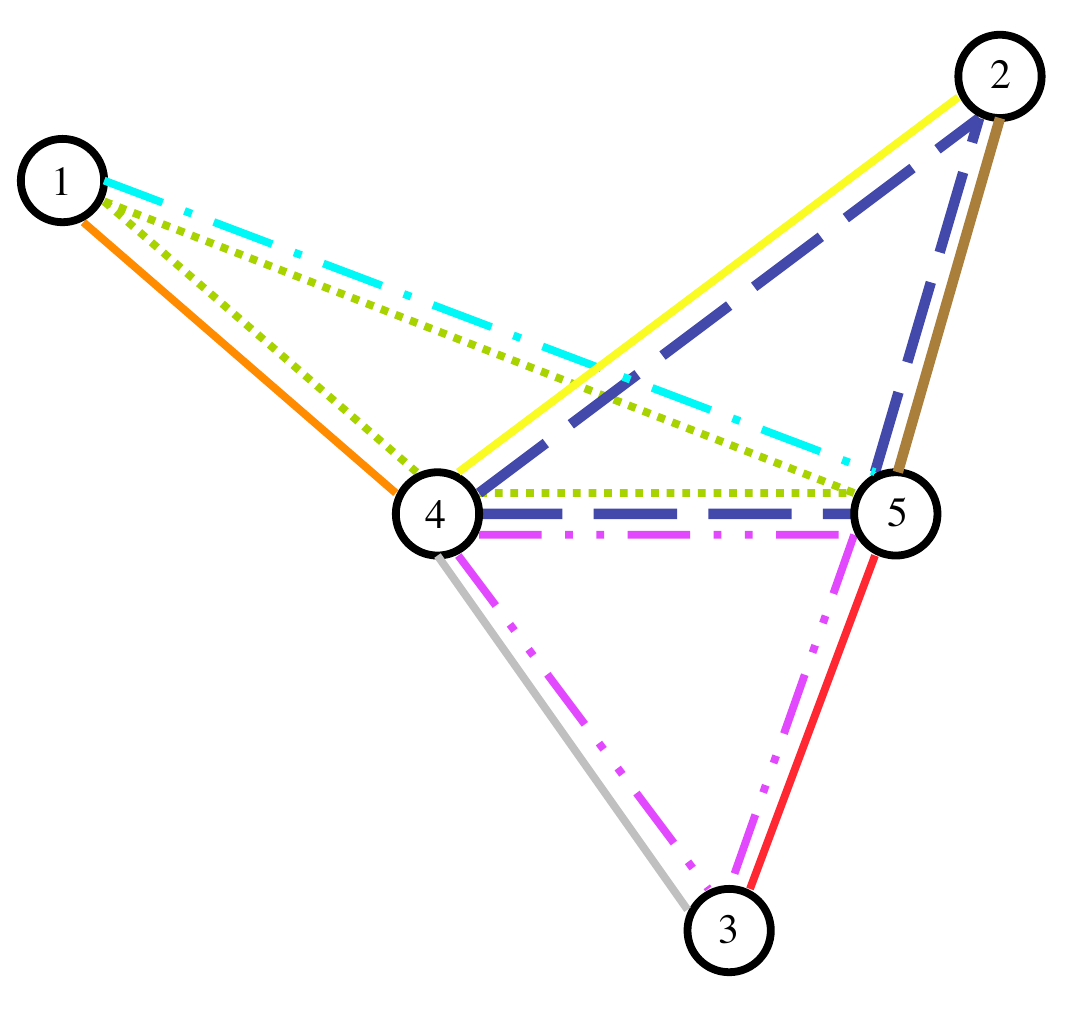}}
\caption{A counterexample to  simple heuristics approaches.}\label{txt4dcmp}
\end{figure}

\subsection{Hardness for Weighted Instances}
\label{lognhard}

We now show that the weighted problem is NP-hard to approximate within a factor of $\log n$.
To do so, we exhibit a reduction from Min Hitting Set, which is known to be hard to approximate
within $\log n$.

An instance of Min Hitting Set consists of a universe $U = \{u_1, \ldots, u_n\}$ and a collection
of sets $\mathcal{T} = \{T_1, \ldots, T_m\}$, each of which is a subset of $U$.
We construct a weighted instance of Min Spanning-Tree Hitting Set as follows.
Let $r \not \in V$ be a new vertex. We set
\begin{eqnarray*}
V =& U + r \\
E =& K_U \cup \left\{ \; \{r,u_i\} \, : \,   \mbox{for all $u_i \in U$}  \; \right\}     \\
S_{\{i,j\}} =& \{u_i,u_j\} \, \,  \,  \,  \mbox{for $1 \leq i < j \leq n$} \\
S'_i =& T_i + r \, \,  \,  \,  \mbox{for $1 \leq i \leq m$},
\end{eqnarray*}
where $K_U$ denotes (the edges of) the complete graph on vertex set $U$.
The edges belonging to $K_U$ have weight $1$ and the edges incident with $r$ have weight $n^3$.
Let $h$ denote the minimum weight of a Spanning-Tree Hitting Set in $G$.
Let $h'$ denote the minimum cardinality of a Hitting Set for $\mathcal{T}$.

\begin{claim}
\label{reduction}
$h = h' \cdot n^3 + \binom{n}{2}$.
\end{claim}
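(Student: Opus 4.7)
The plan is to prove the equality by showing both inequalities, with the key observation being that the singleton-pair constraints $S_{\{i,j\}}$ force the entire cheap clique $K_U$ into any feasible solution, while the constraints $S'_i$ exactly translate into a hitting set requirement on the $r$-edges.

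First I would argue the mandatory content of any feasible $F$. For each pair $\{i,j\}$, the induced subgraph $G[S_{\{i,j\}}]$ is a single edge $\{u_i,u_j\}$, whose only spanning tree is that edge itself; hence every edge of $K_U$ must lie in $F$. This contributes exactly $\binom{n}{2}$ to the weight, since $K_U$ edges have weight $1$.

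Next I would analyze what the constraints $S'_i = T_i + r$ demand beyond the edges already in $K_U$. Since $K_U$ is entirely in $F$, the subgraph $F \cap E(G[S'_i])$ already contains a spanning tree on $T_i$; the only remaining obligation is that $r$ be connected to $T_i$, which requires at least one edge $\{r,u\}$ with $u \in T_i$ to be in $F$. Letting $H = \{u \in U : \{r,u\} \in F\}$, feasibility is therefore equivalent to $H \cap T_i \neq \emptyset$ for every $i$, i.e., $H$ is a hitting set for $\mathcal{T}$. Each $r$-edge costs $n^3$, so the total weight of $F$ is exactly $\binom{n}{2} + |H|\cdot n^3$.

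From here both inequalities are immediate. For the upper bound $h \le h' n^3 + \binom{n}{2}$, take $F$ to be $K_U$ together with $\{\{r,u\} : u \in H^*\}$ where $H^*$ is a minimum hitting set; the argument above shows this $F$ is feasible. For the lower bound, any feasible $F$ contains $K_U$ and induces a hitting set $H$ of size at least $h'$, giving weight at least $\binom{n}{2} + h' n^3$. There is no real obstacle here; the only thing that needs a moment of care is confirming that the high weight $n^3$ on $r$-edges plays no role in the \emph{feasibility} argument (only in the cost), and that no edge outside $K_U \cup \{\{r,u\}\}$ exists at all, which is explicit in the definition of $E$.
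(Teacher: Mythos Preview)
Your proof is correct and follows essentially the same approach as the paper: force $K_U\subseteq F$ via the pair constraints $S_{\{i,j\}}$, then observe that the $r$-edges present in $F$ form exactly a hitting set for $\mathcal{T}$, and read off both inequalities. The paper organizes the two directions separately rather than first establishing the exact weight formula $w(F)=\binom{n}{2}+|H|\,n^3$ as you do, but the content is identical.
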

\begin{proof}
First we show that $h' \leq \big(h - \binom{n}{2} \big)/n^3$
Let $F$ be a spanning-tree hitting set.
Clearly $K_U \subseteq F$, because of the sets $S_{\{i,j\}}$.
So all edges in $F \setminus K_U$ are of the form $\{r,u_i\}$.
Now define $C = \left\{\; u_i \,:\, \{r,u_i\} \in F \;\right\}   $.
We now show that $C$ is a hitting set. Consider a set $T_i$.
Since $F$ contains a spanning tree for $S'_i$, it must contain some edge $\{r,u_i\}$.
This shows that $C$ hits the set $T_i$.

Now we show that $h \leq h' \cdot n^3 + \binom{n}{2}$.
Let $C \subseteq U$ be a hitting set for $\mathcal{T}$.
Let $F = K_U \cup \left\{\; \{r,u_i\} \,:\, \mbox{for all $u_i \in C$} \;\right\}$.
We now show that $F$ is a spanning-tree hitting set.
Each set $S_{\{i,j\}}$ is clearly hit by the set $K_U$.
So consider a set $S'_i = T_i + r$.
All edges $\{u_a,u_b\}$ with $a, b \in T_i$ are contained in $K_U$.
Furthermore, since $C$ is a hitting set, there exists an element $u_a \in T_i \cap C$.
This implies that $\{r,u_a\} \in F$,
and hence $F$ contains a spanning tree for $G[S'_i]$.
\end{proof}

Given an instance $\mathcal{T}$ of Hitting Set,
it is NP-hard to decide whether $OPT(\mathcal{T}) \leq f(n)$
or $OPT(\mathcal{T}) > \alpha \log n \cdot f(n)$ for some constant $\alpha > 0$ and some function $f$.
To prove $\log n$-hardness of Min Spanning-Tree Hitting Set,
we must similarly show that for any instance $y$,
there exists a constant $\beta>1$ and a function $g$ such that
it is NP-hard to decide whether $OPT(G) \leq g(y)$
or $OPT(G) > \beta \log n \cdot g(y)$.

From our reduction, we know that it is NP-hard to distinguish between
$$OPT(G) \leq f(n) \cdot n^3 + {\textstyle \binom{n}{2}}$$
$\text{\quad or \quad}$
$$OPT(G) > \alpha \log n \cdot f(n) \cdot n^3 + {\textstyle \binom{n}{2}}. $$
Now note that
\begin{align*}
\frac{\alpha \log n \cdot f(n) \cdot n^3 + {\textstyle \binom{n}{2}}}
{f(n) \cdot n^3 + {\textstyle \binom{n}{2}}}
&= \\
\frac{\alpha \log n \cdot \big(f(n) \cdot n^3 + {\textstyle \binom{n}{2}}/(\alpha \log n) \big)}
{f(n) \cdot n^3 + {\textstyle \binom{n}{2}}} 
&= \\
\alpha \log n \cdot \Bigg( 1 - \frac{{\textstyle \binom{n}{2}}\cdot \big(1- 1/ \alpha \log n) \big)}
{f(n) \cdot n^3 + {\textstyle \binom{n}{2}}} \Bigg)
& \geq 
\beta \log n
\end{align*}
for some constant $\beta > 0$.
Letting
$g(y) = f(n) \cdot n^3 + {\textstyle \binom{n}{2}}$,
it follows that Min Spanning-Tree Hitting Set is NP-hard to approximate within $\log n$.

\subsection{Hardness for Unweighted Instances}
\label{apxhard}

We show APX-hardness for the unweighted problem via a reduction from Vertex Cover.
The approach is similar to the construction in Section~\ref{lognhard}.
Suppose we have an instance $G'=(V',E')$ of the Vertex Cover problem.
We use the fact that Vertex Cover is equivalent to Min Hitting Set where $U = E'$ and $\mathcal{T} = E'$.
The construction differs only in that $E'$ is used in place of the edge set $K_U$;
the sets $S_{\{i,j\}}$ are adjusted accordingly.
Let $h$ denote the minimum cardinality of a Spanning-Tree Hitting Set in $G$.
Let $c$ denote the minimum cardinality of a Vertex Cover in $G'$.
A claim identical to Claim~\ref{reduction} shows that $h = c + \lvert E' \rvert$.

Recall that Vertex Cover is APX-hard even for constant-degree instances;
see, e.g., Vazirani~\cite[\S 29]{Vazirani}.
So we may assume that $ \lvert E' \rvert \leq \frac{d}{2} \lvert V' \rvert$.
Given an instance $G'=(V',E')$ of Vertex Cover with degree at most some constant $d$,
it is NP-hard to decide whether $OPT(G') \leq \alpha' \card{V'}$
or $OPT(G') > \beta' \lvert V' \rvert$ for some constant $\alpha' < \beta'$.
To prove APX-hardness of Min Spanning-Tree Hitting Set,
we must similarly show that for any instance $G$,
there exists a constant $\gamma>1$ such that
it is NP-hard to decide whether $OPT(G) \leq f(G)$
or $OPT(G) > \gamma f(G)$.
From our reduction, we know that it is NP-hard to distinguish between
$$OPT(G) \leq \alpha' (\lvert V \rvert -1) + (\lvert E \rvert - \lvert V \rvert + 1) $$
$\text{\quad or \quad}$ 
$$OPT(G) > \beta' (\lvert V \rvert -1) + (\lvert E \rvert - \lvert V \rvert + 1). $$
Now note that
\begin{align*}
\frac{\beta' (\card{V}-1) + (\lvert E \rvert - \lvert V \rvert + 1)}
{\alpha' (\lvert V \rvert -1) + (\lvert E \rvert - \lvert V \rvert + 1)}
&= \\
1 + \frac{(\beta' - \alpha') (\lvert V \rvert -1)}
{\alpha' (\lvert V \rvert -1) + (\lvert E \rvert - \card{V} + 1)} 
&\geq  \\
1 +  \frac{(\beta' - \alpha') (\card{V} - 1)}
{(d/2-1 + \alpha') \card{V} + 1 - \alpha'} 
&= \\
1 +  \frac{\beta' - \alpha'}{d/2 + \alpha'} \cdot
\frac{\card{V} - 1}{\card{V} - 1},
\end{align*}
which is a constant greater than $1$.
Letting $\gamma$ be this constant, and letting
$f(y) = \alpha' (\card{V}-1) + (\card{E} - \card{V} + 1)$,
it follows that Min Spanning-Tree Hitting Set is APX-hard.

%%%%%%%%%%%%%%%%%%%%%%%%%%%%%%%%%%%%%%%%%%%%%%%%
\section{The Problem for Matroids}
\label{matroids}

The Min Spanning-Tree Hitting Set can be rephrased as a question about matroids.
Let $E$ be a ground set.
Let $M_i = (E,\cI_i)$ be a matroid for $1 \leq i \leq k$.
Our objective is to find $F \subseteq E$ such that 
\begin{itemize}
\item $F$ contains a basis for each $M_i$, and
\item $\card{F}$ is minimized.
\end{itemize}
We call this the Minimum Basis Hitting Set problem.

\subsection{Connection to Matroid Intersection}

Suppose we switch to the dual matroids.
Note that $F$ contains a basis for $M_i$ if and only $E \setminus F \in \cI_i^*$.
Then our objective to find $F' \subseteq E$ such that
\begin{itemize}
\item $F' \in \cI_i^*$ for each $i$, and
\item $\card{F'}$ is maximized.
\end{itemize}
Suppose that such a set is found, and let $F \defeq E  \setminus F'$. %\setminus
The first property implies that $F$ contains a basis for each $M_i$.
The second property implies that $\card{F}$ is minimized.
Stated this way, it is precisely the Matroid $k$-Intersection problem.
So, from the point of view of exact algorithms, Min Basis Hitting Set
and Matroid k-Intersection problems are equivalent.
However, this reduction is not approximation-preserving,
and implies nothing about approximation algorithms.

\subsection{Hardness}

\begin{theorem}
Min Basis-Hitting Set is NP-hard.
\end{theorem}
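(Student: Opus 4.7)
The plan is to observe that Min Basis-Hitting Set contains Min Spanning-Tree Hitting Set as a special case, and then invoke the hardness result already established in Section~\ref{lognhard} (or the APX-hardness of Section~\ref{apxhard}). So the proof will be a direct reduction rather than a new combinatorial construction.

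Concretely, given an instance $(G, S_1, \ldots, S_k)$ of Min Spanning-Tree Hitting Set, I would take the ground set of the matroid instance to be $E = E(G)$ and, for each $i$, define $M_i$ to be the graphic matroid of the induced subgraph $G[S_i]$, extended to the common ground set $E$ by declaring every edge of $E \setminus E(G[S_i])$ to be a loop of $M_i$. Since loops lie in no independent set, a basis of $M_i$ is exactly a spanning tree of $G[S_i]$. Hence for any $F \subseteq E$, the condition ``$F$ contains a basis of $M_i$'' is equivalent to ``$F \cap E(G[S_i])$ contains a spanning tree of $G[S_i]$,'' which is precisely the condition defining Min Spanning-Tree Hitting Set. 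The reduction therefore preserves feasibility and objective value, so any algorithm for Min Basis-Hitting Set solves Min Spanning-Tree Hitting Set on the same instance.

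Since Min Spanning-Tree Hitting Set is NP-hard (Section~\ref{lognhard} shows it is even hard to approximate within $\log n$ in the weighted case, and Section~\ref{apxhard} shows APX-hardness in the unweighted case), this reduction immediately gives NP-hardness of Min Basis-Hitting Set. The only point that needs comment in the write-up is the representation of the input matroids: graphic matroids admit a compact encoding by their underlying graph and an efficient independence oracle, so the reduction is polynomial in whichever standard matroid input model is assumed.

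The main (minor) obstacle is the bookkeeping needed to put all $M_i$ on a common ground set $E$, which is handled cleanly by padding with loops as above; there is no deeper difficulty, since the structural content of the hardness is already carried by the spanning-tree version proved earlier.
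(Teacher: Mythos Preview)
Your proof is correct but takes a different route than the paper. You reduce from Min Spanning-Tree Hitting Set via graphic matroids (padded with loops on the common ground set); the paper instead reduces directly from Min Hitting Set by encoding each set $C_i$ as the rank-$1$ uniform matroid on $C_i$, so that hitting $C_i$ is literally the same as containing a basis of that matroid. Both arguments establish NP-hardness. The paper's reduction is shorter and uses the simplest possible matroids, and---more importantly for what follows---it is approximation-preserving from a problem that is already $\Omega(\log n)$-hard to approximate, which immediately yields the corollary that Min Basis Hitting Set is $c\log n$-hard even in the unweighted case. Your route through the unweighted spanning-tree version only inherits APX-hardness, since that is all Section~\ref{apxhard} provides (the $\log n$-hardness of Section~\ref{lognhard} is for the \emph{weighted} spanning-tree problem, whereas Min Basis Hitting Set as stated is unweighted). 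On the other hand, your argument has the conceptual virtue of making explicit that the matroid problem genuinely generalizes the spanning-tree problem.
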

\begin{proof}
\label{basishittinghard}
We do a reduction from the well-known problem Minimum Hitting Set.
An instance of this problem consists of a family of sets $\cC = \{ C_1, \ldots, C_k \}$.
The objective is to find a set $F \subseteq E$ such that $F \cap C_i \neq \emptyset$
for each $i$.
This problem is NP-complete.

Now we reduce it to Minimum Basis Hitting Set.
For each set $C_i$, set $M_i=(E,\cI_i)$ be the matroid where
$ \cI_i = \left\{\; \{c\} \,:\, c \in C_i \;\right\} \cup \{ \emptyset \} $.
That is, $M_i$ is the rank-1 uniform matroid on $C_i$.
So a basis hitting set for these matroids corresponds precisely
to a hitting set for the the sets $\cC$.
\end{proof}

\begin{corollary}
Min Basis Hitting Set is NP-hard to approximate with $c \log n$ for some positive constant $c$.
\end{corollary}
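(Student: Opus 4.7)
The plan is to observe that the reduction given in the proof of Theorem~\ref{basishittinghard} is in fact approximation-preserving, so the known $\log n$-hardness of approximation for Min Hitting Set (due to Raz--Safra / Feige) transfers directly.

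First, I would revisit the reduction: given an instance $\cC=\{C_1,\ldots,C_k\}$ of Min Hitting Set on ground set $E$, we build matroids $M_i=(E,\cI_i)$ where $M_i$ is the rank-$1$ uniform matroid on $C_i$ (every singleton $\{c\}$ with $c \in C_i$ is independent). A set $F \subseteq E$ contains a basis of $M_i$ iff $F \cap C_i \neq \emptyset$, so the families of feasible solutions to the two problems literally coincide and in particular $\mathrm{OPT}_{\mathrm{BHS}}(M_1,\ldots,M_k) = \mathrm{OPT}_{\mathrm{HS}}(\cC)$.

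Next I would invoke the known hardness result: there is a constant $\alpha>0$ such that it is NP-hard to distinguish instances of Min Hitting Set with $\mathrm{OPT}\le f(n)$ from those with $\mathrm{OPT}> \alpha \log n\cdot f(n)$, where $n=|E|$. Because the reduction preserves the ground set $E$ (so the parameter $n$ is unchanged) and preserves the optimum exactly, the same gap carries over to the Min Basis Hitting Set instances produced by the reduction. Therefore no polynomial-time algorithm can approximate Min Basis Hitting Set within a factor $c\log n$ (for $c$ any constant strictly smaller than $\alpha$) unless $\mathrm{P}=\mathrm{NP}$.

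There is no real obstacle here beyond bookkeeping: the only point to check carefully is that the reduction is exact (which it is, by the bijection between singletons in bases and elements of $C_i$) and that the size parameter $n$ is the same on both sides, so the $\log n$ factor from Min Hitting Set does not degrade. With these observations the corollary follows immediately from Theorem~\ref{basishittinghard} combined with the inapproximability of Min Hitting Set.
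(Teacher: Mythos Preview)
Your proposal is correct and follows essentially the same approach as the paper: invoke the $\log n$-inapproximability of Min Hitting Set (equivalently Set Cover) and observe that the reduction of Theorem~\ref{basishittinghard} is approximation-preserving, so the hardness transfers directly. You spell out the approximation-preservation more carefully than the paper does (noting that feasible sets coincide, the optimum is preserved exactly, and the size parameter $n$ is unchanged), but the argument is the same.
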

\begin{proof}
It is well-known that Min Hitting Set is equivalent to Set Cover, and is therefore
NP-hard to approximate within $c \log n$ for some positive constant $c$.
Since reduction given in Theorem~\ref{basishittinghard} is approximation preserving,
the same hardness applies to Min Basis Hitting Set.
\end{proof}

\subsection{An Approximation Algorithm}

We consider the greedy algorithm for the Min Basis Hitting Set problem.
Let $\cO \subseteq E$ denote an optimum solution.
Let $\rank_j$ denote the rank function for matroid $M_j$
and let $r_j$ be the rank of $M_j$, i.e., $r_j = \rank_j(E)$.
Let $F_i$ denote the set that has been chosen after the $i\th$ step of the algorithm.
Initially, we have $F_0 = \emptyset$.
For $S \subseteq E$, let $P(S,e) = \sum_{j=1}^k \big(\rank_j(S+e) - \rank_j(S)\big)$;
intuitively, this is the total ``profit'' obtained, or rank that is hit, by adding $e$ to $S$.
Let $R_i$ denote $\sum_{j=1}^k \big(r_j - \rank_j(F_i)\big)$;
intuitively, if the algorithm has chosen a set $F_i$, then $R_i$ is the total amount of
``residual rank'' that remains to be hit.

Consider the $i\th$ step of the algorithm.
Let's denote the profit obtained by choosing $e_i$ by $p_i = \max_{e \not \in F_{i-1}} P(F_{i-1},e)$.
The greedy algorithm chooses an element $e_i \not \in F_{i-1}$ achieving the maximum profit.
We now analyze the efficiency of this algorithm.
Let $\cO_i$ be a minimum-cardinality set that contains $F_i$ and is a basis hitting set.

For any set $S \supseteq F_i$ and any $e \not \in S$, we have (by submodularity):
\begin{eqnarray*}
\rank_j(S + e) + \rank_j(F_i) & \leq \rank_j(F_i + e) + \rank_j(S) \\
\rank_j(S + e) - \rank_j(S) &\leq \rank_j(F_i + e) - \rank_j(F_i) \\
P(S,e) &\leq P(F_i,e) \leq p_i
\end{eqnarray*}
This implies that each edge in $\cO_i \setminus F_i$ has profit at most $p_i$.
Since $\cO_i$ must ultimately hit all of the residual rank,
but each element hits at most $p_i$, we have $R_{i-1} \leq p_i \cdot \card{\cO_i \setminus F_i}$.

Now, note that $\card{\cO_i \setminus F_i} \leq \card{\cO}$.
This is is because of the non-decreasing property of $\rank_j$:
if $\cO$ is a basis hitting set then so is $\cO \union F_i$.
This observation yields the inequality $1 \leq p_i \cdot \card{\cO} / R_{i-1}$. 
Suppose that the greedy algorithm halts with a solution of cardinality $s$.
Then we have
\begin{align*}
s &\leq \sum_{i=1}^s \frac{ p_i \cdot \card{\cO} }{ R_{i-1} }
\leq \card{\cO} \cdot \sum_{i=1}^s \frac{ p_i }{ R_{i-1} } \\
&\leq \card{\cO} \cdot \sum_{i=1}^s \: \sum_{0 \leq j < p_i} \frac{ 1 }{R_{i-1} - j}
\leq \card{\cO} \cdot \log R_0.
\end{align*}
Here, the last inequality follows from the fact that $R_i = R_{i-1} - p_i$ for $1 \leq i \leq s$.
Note that $R_0 = \sum_{j=1}^k r_j$ is the total rank of the given matroids.

\begin{table}[thb]
{\footnotesize \caption{Greedy algorithm} \rule{1.0 \linewidth}{1pt}
\begin{tabbing}
111\=111\=1111\=11\=111\= \kill
  {\bf Input:} lexicon $V$, set of $k$ sentences $S_1,\ldots ,S_k$ s.t. $S_i \subset V$ ; \\
  {\bf Initialize:} Assign each pair of words a count of sentences it appears in; \\
  \> Sort word pairs (edges) in decreasing order; \\
  \> {\bf Loop:} through edges; \\
  \> \>  Add top edge $e$ into the list of edges; \\
  \> \>  Adjust edge weights in sentences which contained edge $e$; \\ 
  \> {\bf Until} each sub-graph has a spanning tree (i.e. sentence has a parse);\\
  {\bf Output:} a set of spanning trees for all $S_i$;
\end{tabbing}
} \rule{1.0 \linewidth}{1pt} \label{alg_txt}
\end{table}

The preceding argument shows that the greedy algorithm has approximation ratio $O(\log n)$,
where $n$ is the length of the input. Table~1 presents description of the algorithm. 
Informally speaking, the algorithm could be explain as follows:
   {\em  Estimate potential number of sub-graphs each edge would contribute to if used.
   Loop through all edges, adding in (greedily) the edge which contributes to
the most spanning trees, then re-calculate potential contributions.}

\subsection{Contrast with Matroid Union}

Consider the matroid union problem for matroids $M_i^*$.
The matroid union problem is:
$$ \max \left\{\; \card{ \Union_i S_i} \,:\, S_i \in \cI_i^*  \;\right\} $$
But note that $S_i \in \cI_i^*$ iff $r_i(V \setminus S_i) = r_i(V)$.
In other words, $S_i \in \cI_i^*$ iff $\bar{S_i}$ contains a basis for $M_i$.
And maximizing the size of the union is the same as minimizing the size of the complement
of the union.
So an equivalent problem is:
$$ \min \left\{\; \card{ \Intersect_i \overline{S_i}} \,:\,  \mbox{$\overline{S_i}$ contains a basis for $M_i$} \;\right\} $$
The minimum does not change if we assume that $\overline{S_i}$ in fact \emph{is} a basis. 
So, letting $T_i$ denote $\bar{S_i}$, we obtain the equivalent problem:
$$ \min \left\{\; \card{ \Intersect_i T_i} \,:\, \text{$T_i$ is a basis for $M_i$} \;\right\} $$
This problem is solvable in polynomial time, because it is just matroid union in disguise.
It is quite similar to the Minimum Basis Hitting Set problem, except that it 
has an ``intersection'' rather than an ``union''.

\subsection{Empirical study}
We ran preliminary experiments with the approximation algorithm on adult child-directed speech from the CHILDES corpus~\cite{CHILDES}. These experiments demonstrated that the algorithm performs better than the baseline adjacency heuristic because of its ability to pick out non-adjacent dependencies. For example, the sentence "Is that a woof?" is parsed into the following set of links: woof-is, that-is, a-woof. The links correspond to the correct parse tree of the sentence, In contrast, the baseline adjacency heuristic would parse the sentence into is-that; that-a; and a-woof,  which fails to capture the dependence between the predicate noun "woof" and the verb, and postulates a non-existent dependency between the determiner "a" and the subject "that". However, more work is needed to thoroughly assess the performance. In particular, one problem for direct application is the presence of repeated words in the sentence. The current implementation avoids the issue of repeated words in its entirety, by filtering the input text. An alternative approach is to  erase the edges among repeated words from the original fully connected graph. This assumes that no word can be a dependent of itself, which might be a problem in some contexts (e.g. "I know that you know"). Related work which was not completed at the time of writing this manuscript seeks to incorporate adjacency as a soft linguistic constraint on the graph by increasing initial weight edges of adjacent words.

\section{Discussion}

We presented some theoretical results for a problem on graphs which is inspired by the unsupervised link grammar induction problem from linguistics. Numerous possible directions for the future work would include searching for more efficient approximation algorithms under various additional constraints on admissible spanning trees, as well as characterizing instances of the problem which could be solved efficiently. Another possible direction is allowing "ungrammatical" corpus as input, e.g. searching efficiently for partial solutions, where several sentences remain unparsed or not fully parsed. 
Another direction is to look for a solution to a directed graph analog of the problem considered here, which would require finding minimal set of arborescences and relate to the directed dependency parsing. 
One other question which remains open is an edge weighing scheme which would reflect syntactic consideration and particular language-related constraints, as in the so-called Optimality Theory~\cite{SavovaPhD}. 

Exploring relation of this problem to other application would be interesting. One such example could be an autonomous network design, where an objective is to efficiently design a network that must connect joint units of organizations which do not necessarily trust each other and want to maintain their own skeletal sub-network in case their partner's links fail.

\comment{
two cases:
1. If you do NOT include the "root" node in each sub-graph, there is no
way to recover the root, and directionality on edges
2. If you DO include the "root" node in every sub-graph, you have to
enforce   that it is a LEAF node - unclear how to do it exactly.

With directed case, seems you can enforce things by leaving only arc
leading TO the root from each node in the original graph. The _approximation_
algorithm stays the same:
     LOOP through edges, adding in (greedily) the arc which contributes to
most arborescences (i.e. oriented ST), re-calculate potential contributions.

This seem to be "biologically plausible" - at least seems like once could
keep counts in distributed fashion and there is not much interaction between
components, leading to provably good solution.

This formulation does NOT include things like
   -. insure that dependancies do not cross
   -. on-line implementation (maintain solution as new sub-sets keep coming)
   -. which instances allow an easy _exact_ solution
       [ One could think that some requirement on the frequency of words/pairs
        would yield some better algorithms/estimates ]
   -. what kinds of constraints on original sub-graphs yield better alg/estimates
}
%%%%%%%%%%%%%%%%%%%%%%%%%%%%%%%%%%%%%%%%%%%%%%%% 
\comment{
\section{Misc Thoughts}

Can one approximate k-matroid intersection?
The usual hardness proof is that matroid 3-intersection includes directed Ham-path.
But, a non-basic solution doesn't give a longest path, it gives the largest number of disjoint paths.
And, as David points out, a maximum matching gives a 2-approximation for this anyways.
Michel points out that a 2-factor gives a 2/3-approximation. 
}

\bibliography{Notes}
\bibliographystyle{abbrv}  % allrefs
\end{document}